\newtheorem{theorem}{Theorem}
\newtheorem{thm}[theorem]{Theorem}
\newtheorem{lemma}[theorem]{Lemma}
\newtheorem{prop}[theorem]{Proposition}
\def\R{\ensuremath{\Bbb R}}
\newenvironment{proof}{\begingroup\Proof}{\qed\endgroup}
\def\Proof{\noindent{\bf Proof\/:}\nobreak}
\def\qed{\unskip~{\vrule height 4pt depth 0pt width 4pt}\medbreak}
\newcommand{\etal}{\textit{et al.}}
\title{Angle Optimization of Graphs Embedded in the Plane}
\author{Sergey Bereg\thanks{
Department of Computer Science,
University of Texas at Dallas,
Box 830688,
Richardson, TX 75083
USA. E-mail: {\tt besp@utdallas.edu}
}
\and
Timothy Rozario\thanks{
Department of Computer Science,
University of Texas at Dallas,
Box 830688,
Richardson, TX 75083
USA. E-mail: {\tt tmr100020@utdallas.edu}
}
}
\date{}
\begin{document}
\maketitle

{\bf Abstract:}
In this paper we study problems of drawing graphs in the plane using 
edge length constraints and angle optimization. Specifically we consider the problem of  
maximizing the minimum angle, the {\em MMA problem}. 
We solve the {\em MMA} problem using a spring-embedding approach where two forces are applied to the vertices of the graph: a force optimizing edge lengths and a force optimizing angles. We solve analytically the problem of computing an optimal displacement of a graph vertex optimizing the angles between edges incident to it if the degree of the vertex is at most three. We also apply a numerical approach for  computing the forces applied to vertices of higher degree. We implemented our algorithm in Java and present drawings of some graphs.

\section{Introduction}

Angular resolution is one of the aesthetic criteria measuring the quality of graph drawings in terms of human comprehension. 
The {\em angular resolution} of a straight-line drawing in the plane is the minimum angle between any two incident edges.    
The study of graph drawing with angular resolution started by Formann \etal \cite{formann93} 
in 1990. They introduced the {\em angular resolution} of a graph as the supremum angular resolution over all straight-line drawings of the graph. The problem of computing the angular resolution of a graph is NP-hard (even the problem of deciding if $R=\pi/2$ for graphs with vertex degrees at most four is NP-hard).   

The main focus in the early investigation \cite{formann93,mp-arpg-94} was on bounding the angular resolution of a graph in terms of the maximum vertex degree $d$. The obvious upper bound is $R(d)\le 2\pi/d$. A lower bound $R(d)\ge \Omega(1/d)$ has been proved \cite{formann93}  for many graphs including planar graphs and complete graphs. 
A lower bound $R(d)\ge \Omega(1/d^2)$ holds for all graphs \cite{formann93}.
If we insist on planar straight-line drawings, then  $R(d)\ge \alpha^d$ for some constant $0<\alpha<1$ \cite{mp-arpg-94}. 

There was also a study of the optimization problem where the angular resolution of a given graph is maximized.
Matousek \etal \cite{msw-96} considered an {\em angle-optimal placement of point in polygon}. In this problem,  the task is to find a point $p$ in the kernel of a star-shaped polygon $P$ such that after connecting $p$ to all the vertices of $P$ by straight edges, the minimum angle 
between two adjacent edges is maximized. The {\em kernel} of $P$  is defined as the locus of 
all points inside $P$ that see all the edges and vertices of $P$ (it is not empty since $P$ is star-shaped). They showed that it is an LP-type problem of combinatorial dimension 3.
Amenta \etal \cite{abe-99} studied various problems of optimal point placement for mesh smoothing (using different mesh quality measures). A related result is the polynomial-time 
algorithm for computing a Steiner point in a star-shaped polygon, minimizing 
the maximum angle. A parallel algorithm for mesh smoothing is presented in 
\cite{mesh1}.

Carlson and Eppstein \cite{ce-tcf-06} considered tree drawings such that  all faces form  convex polygons (the infinite faces are created by extending the edges incident on leaves to the infinity). They showed that the optimal angular resolution can be computed in linear time and even the lengths of the tree edges may be chosen arbitrarily.
In the recent paper, Eppstein and Wortman \cite{ew-oar-11} considered 
graph drawing in the plane with faces drawn as centrally symmetric convex polygons. 
They found a polynomial time algorithm for computing a drawing maximizing the angular resolution. 

Another direction in graph drawing with the angle resolution is {\em  Lombardi drawing} where 
the graph edges are represented as circular arcs instead of straight-line segments 
\cite{ccgkt-11,degkn-10,degkn-12}.
Relaxing  the condition of straight-line segments allows to achieve {\em perfect angular resolution} where the edges are equiangularly spaced around each vertex. 
The classes of graphs admitting Lombardi drawings are presented in \cite{degkn-12} (and the algorithms for finding these drawings). 
Duncan \etal~\cite{degkn-10} found  that unrooted trees can drawn with perfect angular resolution and polynomial area.

In this paper we study the problem of maximizing the angular resolution using the force-directed approach.  This idea is not new and some algorithms for optimizing angular resolution using the force-directed or spring-embedding approach \cite{abs-mtrg-10,ccgkt-11}. The common feature of these approaches is that the forces  are directed from one vertex to another vertex. 
Our approach is different in the sense that we want to optimize the dislocation of a vertex. 
It can be viewed as a restricted version of the angle-optimal placement of point in polygon
where a graph is embedded in the plane with straight-line edges and we want to find a new position of a given vertex by moving it at distance at most $r>0$ (for a given $r$) and optimizing the incident angles. We call it {\em Max-Min Angle Problem} (MMA). 
This perturbation  problem is interesting in its own right. 
From combinatorial point of view, it is an LP-type problem with the same combinatorial 
dimension as the angle-optimal placement of point in polygon \cite{msw-96} since only one 
new constraint is added. However, the situation is quite different from the 
algebraic point of view.

The angle-optimal placement of point in polygon is related to the famous Fermat problem (which appears as a special case when the vertex degree is three). In general it is known as 
the Fermat-Weber problem: given $n$ points in a Euclidean space $\R^d$, find a point minimizing the sum of the distances to $n$ given points. This point is called the {\em geometric median}. We assume $d=2$. 
If $n\le 4$ then the geometric median can be computed exactly. However, it cannot be computed {\em exactly} if $n\ge 5$, in general (i.e. for some instances) \cite{b-ad-88}.  

The MMA problem is harder algebraically than the angle-optimal placement of point in polygon since the optimal point can be at distance $r$ from the initial point. 
In fact, we show that even for a vertex of degree three,  the solution involves polynomials of degree 6 in some (difficult) cases. The polynomials of degree at least five  cannot be solved exactly in general. The main result of this paper is that  the solution of the MMA problem for a vertex of degree three can be computed {\em exactly} (we show that it can be expressed using a polynomial of degree four only).

Another reason why we introduce and study the MMA problem is that the parameter $r$ allows to control the strength of the angular resolution force applied to vertices. This works well if we use more than one force. For example, we applied it to the spring embedding with length constraints. 

The paper is organized as follows.
In Section \ref{s2} we briefly describe force-directed graph drawing and introduce Max-Min-Angle problem. In Section \ref{s3} we recall the classical Fermat problem that appears as the special case of the MAA problem. In Section \ref{s:MAA} we provide a solution of the MMA problem for vertex degree two. 
Optimal solution for degree three vertices is provided in Section \ref{sectdeg3}. 
The algorithm and its performance is discussed in Section \ref{algorithm}.
Finally we conclude in Section \ref{concl}.

\section{Spring Embedder and Problem Statement} \label{s2}

Force-directed graph drawing is a popular technique and there is a growing literature on 
force-directed drawing algorithms, see the recent survey by Kobourov \cite{k-fdda-12}.
Eades \cite{EA84} introduced a mechanical model for graph drawing. 
To achieve aesthetically pleasing layouts and capture the edge length constraints he applied 
attracting/repelling force between two vertices if the distance between them is less/greater than the desired length. He found that the Hookes Law (linear) springs 
are too strong when the vertices are far apart but the logarithmic force solves this problem.
As initial embedding of the graph the algorithm places its vertices at random locations. 
The algorithm stops after a sufficient number of iterations. 
If a state of equilibrium is reached i.e. all forces are zero, then the graph embedding reaches 
the desired positioning in the plane and remains static. An example of such drawing is shown in Figure \ref{fig:algae}.

\begin{figure}[htp]
\centering
\includegraphics[scale=0.4]{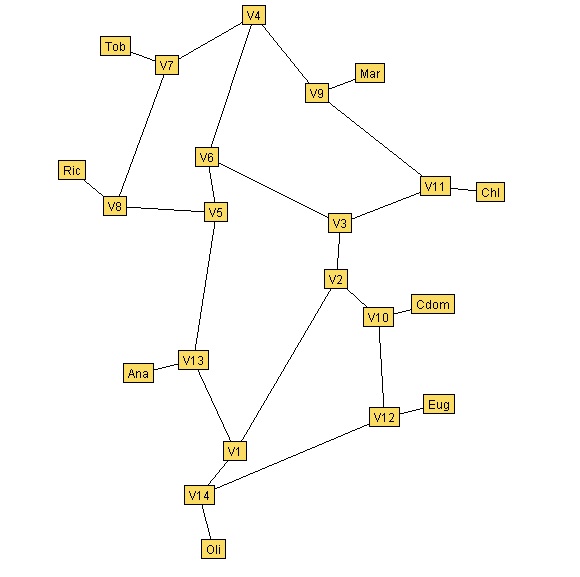}
\caption{A drawing of Algae graph \cite{pb-opn-12} produced by the spring embedder \cite{embedder}.}
\label{fig:algae}
\end{figure}

Fruchterman and Reingold \cite{ft91} added a condition of "even vertex distribution" which is modeled  by attractive forces between adjacent vertices and repulsive forces between all pairs of vertices. This increases the number of forces ($|V|^2$ repulsive forces for a graph $G=(V,E)$) and slows down the algorithm.

The algorithms by Eades  \cite{EA84} and Fruchterman and Reingold \cite{ft91} are just two examples of force-directed graph drawing. There are many spring embedders nowadays \cite{k-fdda-12}. 
We consider a new force-directed approach for angular resolution where the vertex displacement is optimized locally. 

\subsection{Problem Statement}

We consider one step of the spring embedder where the vertices of a
graph $G$ are embedded in the plane and they are allowed to move
slightly. The spring embedder may take into account several forces,
for example the forces aiming to achieve the desired lengths of the edges
of $G$. We introduce a force that aims to optimize, for all vertices
$v$, the angles between edges incident to $v$ and embedded in the
plane. Ideally, we want the edges to spread evenly around $v$. This
may not be possible due to other constraints of the drawing (edge
lengths, for example). Our goal is to maximize, for every $v$,  the
{\em smallest angle} between two edges incident to $v$. 
For this, we formulate our problem.

\begin{quote}
{\bf Max-Min-Angle (MMA) Problem}.
Let $G$ be a graph embedded in the plane. Let $P$ be the position
of a vertex $v$ of $G$ and let $A_1,A_2,\dots,A_k$ be the positions of the adjacent
vertices of $P$ in $G$. Let $r>0$ be a radius. Find a point $P^*$ (the
best position to move $P$ within distance $r$) such that $|PP^*|\le r$
and the smallest angle $\angle A_iP^*A_j$ is maximized.
\end{quote}

Let $Z$ be the circle centered at $P$ and radius $r$. 
We will solve the MMA problem in Sections \ref{s:MAA} and \ref{sectdeg3}
by considering degree $k$ of vertex $v$.

\section{Fermat problem} \label{s3}

When the degree of $v$ is three, the MMA  problem is related to 
the classical {\em Fermat problem}: Given triangle $ABC$ in the plane, find a point $F$ such that the total distance from the three vertices of the triangle to $F$ is the minimum possible. 
The solution of the Fermat problem (called {\em Fermat point} or {\em Torricelli point}) depends on triangle $ABC$.

\begin{figure}[htp] 
\centering
\includegraphics{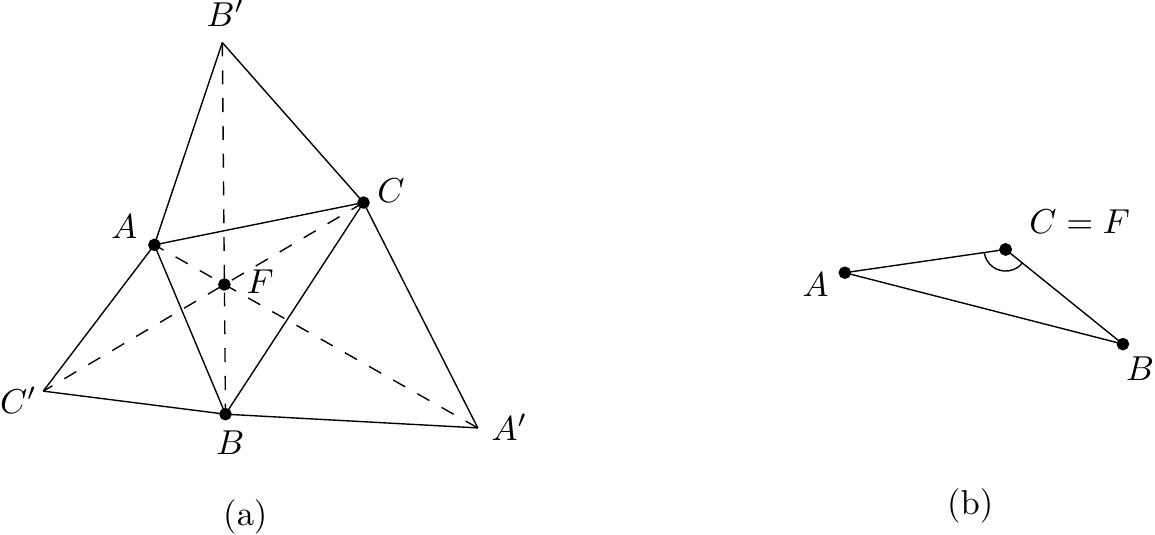}
\caption{The Fermat point $F$ in (a) Case 1 and (b) Case 2.
(a) Three angles $\angle AFB, \angle BFC$, and $\angle CFA$ are equal.
(b) The Fermat point is at $C$.}
\label{pf1}
\end{figure}

 {\bf Case 1:}  All angles of triangle $ABC$ are smaller than $120^\circ$.\\
Construct three new regular triangles $ABC', AB'C,$ and $A'BC$ out of the three sides of triangle $ABC$.
Then the point $F$ is the intersection of three lines $AA',BB'$, and $CC'$, see Figure \ref{pf1}(a). 
In this case the Fermat point $F$  is coincident with the {\em first isogonic center} 
of the triangle \cite{Eric}, where the angles subtended by the sides of the triangle are all equal, 
i.e.  $\angle AFB, \angle BFC, \angle CFA=120^\circ$.

{\bf Case 2:}  There exists an angle  in triangle $ABC$ greater than or equal to $120^\circ$.\\
Only one angle of the triangle is greater than or equal to $120^\circ$, say $\angle ACB\ge 120^\circ$, see Figure \ref{pf1}(b) for example.
Then the Fermat point is coincident with $C$.

\section{The MMA problem} \label{s:MAA}

We solve the MMA problem depending on the degree of vertex $v$. 

\subsection{Vertex of Degree 2} \label{sectdeg2}

Suppose that the degree of $v$ is two.  
Let $A$ and $B$ be the positions of its adjacent vertices. 
The task is to find a point $P^*$ with $|PP^*|\le r$ maximizing angle $\theta=\min(\angle AP'B,\angle BP'A)$ (the angles are in the counterclockwise order).
Suppose that segment $AB$ intersects circle  $Z$.
Obviously, $\theta=180^\circ$ is achieved by any point $P^*$ in the intersection of $AB$ and $Z$.
The solution is unique if $AB\cap Z$ is a single point (for example, $AB$ may be tangent to $Z$).

\begin{figure}[htp] 
\centering
\includegraphics{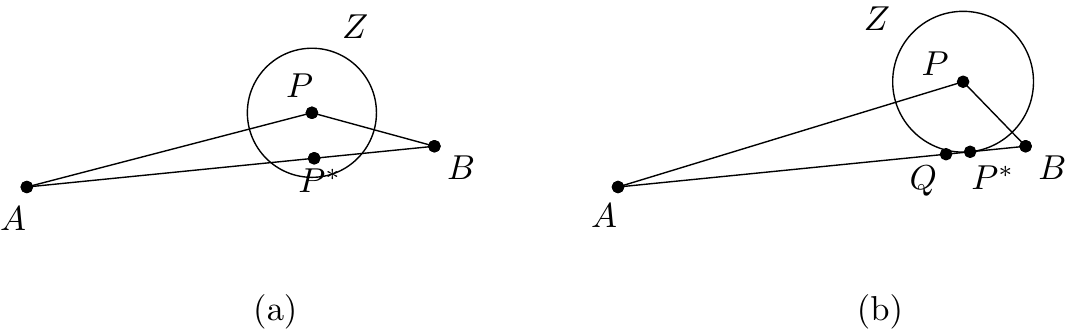}
\caption{ The case where $AB\cap Z\ne\emptyset$. (a) $P^*=Q$ where $Q$ is defined by Equation(\ref{deg2a}). (b) Point $Q$ lies not in $AB\cap Z$.}
\label{f2}
\end{figure}

Suppose that $AB\cap Z$ is a segment of positive length, see Figure \ref{f2}(a).
There are infinitely many positions of $P^*$ (all in $AB\cap Z$ achieving $\theta=\pi$).
If $P$ is not on segment $AB$ then any location of $P^*$ will change distances to $A$ and $B$.
We would like to preserve the ratio $|AP|/|BP|$ and compute point $Q$ in segment $AB$ 
such that $|AQ|/|BQ|=|AP|/|BP|$. This implies

\begin{equation} \label{deg2a}
\frac{|AQ|}{|AB|}=\frac{|AP|}{|AP|+|BP|}.
\end{equation}

Then the coordinates of point $Q$ can be computed using

\begin{align}
Q_x&=A_x+ \frac{|AP|}{|AP|+|BP|} (B_x-A_x), \label{qx}\\
Q_y&=A_y+ \frac{|AP|}{|AP|+|BP|} (B_y-A_y). \label{qy}
\end{align}

If point $Q$ lies in circle $Z$ then $P^*=Q$ as shown in Figure \ref{f2}(a).
Otherwise we select the endpoint of segment $AB\cap Z$ that is closer to $Q$, see Figure \ref{f2}(b) for an example.

We assume now that $AB$ and $Z$ do not intersect.

\begin{prop} \label{propdeg2}
If segment $AB$ and circle $Z$ do not intersect then $P^*$ is on the boundary of $Z$.
Furthermore the circle passing through $A,B$, and $P^*$ is tangent to $Z$.
\end{prop}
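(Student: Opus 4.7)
The plan is to reduce the MMA problem for degree two to maximizing the non-reflex angle $\angle AP'B$ over $P'\in Z$ (viewed as a closed disk), and then apply the inscribed-angle theorem. Since segment $AB$ is disjoint from $Z$, all candidate points $P'$ lie strictly on one side of line $AB$, so $\angle AP'B<\pi$ and therefore the minimum of $\angle AP'B$ and its reflex equals $\angle AP'B$; maximizing $\theta$ is equivalent to maximizing this single angle. Recall the classical fact: on a fixed side of line $AB$, the locus of points $P'$ with $\angle AP'B=\alpha\in(0,\pi)$ is an arc of a circle $C_\alpha$ through $A$ and $B$, and $\alpha$ is a strictly decreasing function of the radius of $C_\alpha$ (smaller circle $\Longleftrightarrow$ larger inscribed angle). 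Equivalently, the region $R_\alpha=\{P':\angle AP'B\ge\alpha\}$ on that side is the closed lens bounded by arc $C_\alpha$ and segment $AB$, and the family $\{R_\alpha\}$ is nested and shrinks as $\alpha$ grows.

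First I would show $P^*\in\partial Z$. Set $\alpha^*=\angle AP^*B$ and let $C^*=C_{\alpha^*}$ be the circle through $A,B,P^*$. Suppose for contradiction that $P^*$ lies in the open interior of $Z$. The arcs $C_\alpha$ vary continuously in $\alpha$ and pass through $P^*$ at $\alpha=\alpha^*$, so for $\alpha$ slightly larger than $\alpha^*$ the arc $C_\alpha$ still meets the open disk of $Z$. Any such intersection point $P'$ satisfies $\angle AP'B=\alpha>\alpha^*$, contradicting the optimality of $P^*$.

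Next I would show the tangency claim. Now $P^*\in\partial Z$, and both $C^*$ and $Z$ pass through $P^*$. If they are not tangent at $P^*$ they must cross transversally there, since two distinct circles meet in at most two points. In that case, any neighborhood of $P^*$ contains points of $Z$ (in fact of the closed disk bounded by $Z$) lying strictly inside $C^*$, i.e.\ inside the lens $R_{\alpha^*}\cap(\text{side of }P^*)$. By the monotonicity of the family $\{C_\alpha\}$, such points $P'$ satisfy $\angle AP'B>\alpha^*$, again contradicting optimality. Hence $C^*$ is tangent to $Z$ at $P^*$.

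The main obstacle is really just making the inscribed-angle monotonicity precise: one needs to identify the correct side of $C^*$ (the side containing segment $AB$) as the region of strictly larger inscribed angles, and to check that transverse crossing of $C^*$ with $Z$ at $P^*$ forces the disk of $Z$ to poke into that side. Both facts follow from the standard inscribed-angle theorem together with a local picture of two transversely crossing circles, so I would invoke them as geometric facts rather than re-deriving coordinates.
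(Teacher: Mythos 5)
Your proof is correct and follows essentially the same route as the paper's: both rest on the inscribed-angle theorem and the nested family of iso-angle arcs through $A$ and $B$, first forcing $P^*$ onto the boundary of $Z$ and then deducing tangency from the fact that a transversal crossing would put points of the disk strictly inside the arc's lens, where the angle is larger. The only cosmetic differences are that the paper pushes $P^*$ to the boundary by sliding it along its own iso-angle arc (which must exit $Z$) rather than perturbing $\alpha$ upward, and a minor imprecision on your side is that disjointness of the \emph{segment} $AB$ from $Z$ does not force $Z$ to lie on one side of the \emph{line} $AB$ --- though the reduction to maximizing the non-reflex angle $\angle AP'B$ is valid regardless.
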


\begin{proof}
Suppose $P^*=(x,y)$ where $x$ and $y$ are unknown.
Consider circular arc $AP^*B$ as shown in Figure \ref{f1}(a).
The angles $\angle AP'B$ are equal for all points $P'$ from arc $AP^*B$ (since the inscribed angle of a chord equals half of the central angle ,see Figure \ref{pointInCirc}(b)). Therefore we can assume that $P^*$ is on the boundary of $Z$. Furthermore arc  $AP^*B$ cannot intersect the boundary of $Z$ twice since any point $P'$ from the arc of $Z$ cut off by arc $AP^*B$ makes an angle larger than $P^*$ makes, i.e.
$\angle AP'B>\angle AP^*B$ as shown in Figure \ref{f1}. Thus, arc $AP^*B$ must be tangent to $Z$ as shown in Figure \ref{f3}(a).
\end{proof}

\begin{figure}[htp] 
\centering
\includegraphics{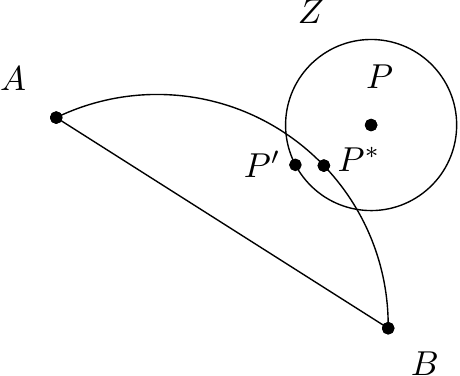}
\caption{Arc $AP^*B$. Wrong location of $P^*$ since $\angle AP'B>\angle AP^*B$.}
\label{f1}
\end{figure}

The MMA problem can be formulated now as the following problem dealing with only one angle.

\begin{quote}
{\bf Maximum Angle Problem}.
Let $A,B$, and $P$ be points in the plane and a number $r>0$ such that $|AP|,|BP|>r$.
Compute a point $P^*$ with $|PP^*|=r$ maximizing angle
$\angle AP^*B$.
\end{quote}

\begin{figure}[htp]
\centering
\includegraphics{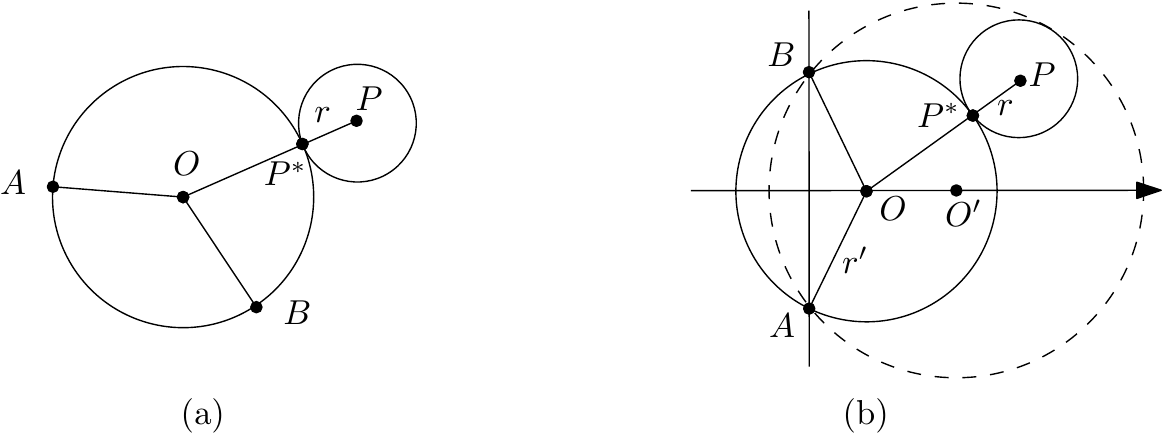}
\caption{(a) The maximum angle problem. (b) The circles centered at $O$ and $O'$ are the two solutions of the system of equations (\ref{map1}) and (\ref{map2}). }
\label{f3}
\end{figure}

Let $O$ be the center of the circle passing through $A,B$, and $P^*$.
It suffices to find point $O$ since point $P^*$ is the intersection point of the circle $Z$ and the line passing through points $O$ and $P$. The first equation is $(P^*_x-P_x)^2+(P^*_y-P_y)^2=r^2$ and the second equation is linear
\[ \left|
\begin{array}{ccc}
    P^*_x & P^*_y          & 1 \\
    P_x & P_y          & 1 \\
    O_x & O_y          & 1 
\end{array} \right|=0, \]
where the subscripts denote the coordinates.

Using coordinate transformations we can assume without loss of generality that $A(0,-1)$ and $B(0,1)$, see Figure \ref{f3}(b). Then $|OA|=|OB|$ implies that $O(x,0)$ where $x$ is unknown. Let $r'=|OA|$ be the radius of the circle centered at $O$.  
Then $|OP|=r+r'$ and $|OA|=r'$. They can be written as

\begin{empheq}[left=\empheqlbrace]{align}
(x-P_x)^2 + P_y^2 &= (r+r')^2 \label{map1} \\
x^2 + 1 &= r'^2 \label{map2} 
\end{empheq}

Subtracting Equation (\ref{map1}) from Equation (\ref{map2}) , we obtain
\begin{equation} \label{eqx}
2P_x x = P_x^2+P_y^2-2r'r-r^2-1.
\end{equation}

By plugging $x$ from this equation into Equation of (\ref{map2})  we obtain a quadratic equation in the variable $r'$. There are two roots of the quadratic equation and they correspond to two circles shown in Figure \ref{f4}. Therefore we take the smallest root of the quadratic equation.

\begin{figure}[htp]
\centering
\includegraphics{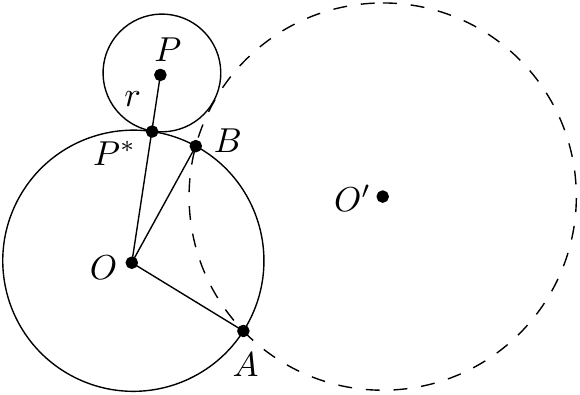}
\caption{An example of the maximum angle problem. The circles centered at $O$ and $O'$ are the two solutions of the system of equations (\ref{map1}) and (\ref{map2}). }
\label{f4}
\end{figure}

Therefore we proved the folowing claim.
\begin{prop} \label{propdeg2a}
The maximum angle problem can be solved using a quadratic equation.
\end{prop}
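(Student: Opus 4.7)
The plan is to exploit the tangency characterization already established in Proposition~\ref{propdeg2} and reduce the resulting geometric constraints to a single quadratic in one variable. By an isometry I may assume $A=(0,-1)$ and $B=(0,1)$; scaling does not affect which angle is maximized (only the value of $r$), so this is genuinely without loss of generality after suitably renormalizing $r$. Let $C$ denote the circle through $A$, $B$, $P^*$ whose existence and tangency to $Z$ is guaranteed by Proposition~\ref{propdeg2}, and let $O$ be its center and $r'$ its radius. Since $|OA|=|OB|$, the center lies on the perpendicular bisector of $AB$, so $O=(x,0)$ for a single unknown $x$.

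Next I would encode the two geometric conditions algebraically. Tangency of $C$ to $Z$ at the point $P^*$, together with $P^*\in Z$, forces $P$, $P^*$, and $O$ to be collinear with $|OP|=r+r'$ (here I use that $P$ lies outside $C$, which follows from $|AP|,|BP|>r$). This gives equation~(\ref{map1}). The condition $|OA|=r'$ gives equation~(\ref{map2}). Subtracting one from the other eliminates the quadratic term in $x$ and yields the linear relation~(\ref{eqx}) between $x$ and $r'$. Solving~(\ref{eqx}) for $x$ and substituting into~(\ref{map2}) then produces a quadratic equation in $r'$ alone (the quadratic terms in $x^2$ coming from~(\ref{map2}) combine with the mixed terms from~(\ref{eqx}) to leave only degree two in $r'$). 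Once $r'$ is known, $x$ is linear in $r'$, $O$ is determined, and $P^*$ is recovered as the intersection of line $OP$ with $Z$.

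The only subtle point — and the main obstacle, such as it is — is the selection of the correct root of the quadratic. The two roots correspond to the two circles through $A$ and $B$ that are tangent to $Z$ (one on each side), as illustrated in Figures~\ref{f3}(b) and~\ref{f4}. The spurious root produces a circle whose arc $AP^*B$ on the side of $P^*$ subtends a smaller inscribed angle; equivalently, it corresponds to external rather than internal tangency, which is why it gives $|OP|=|r'-r|$ instead of $r+r'$. Taking the smaller root of the quadratic (as the paper indicates) selects the larger inscribed angle and hence the genuine maximizer. Finally, I would note for completeness that the assumption $AB\cap Z=\emptyset$ made in the derivation of Proposition~\ref{propdeg2} is exactly the regime in which the quadratic is the governing algebraic object, so the reduction is valid throughout the range of inputs to which the proposition applies.
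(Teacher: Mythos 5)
Your proposal follows essentially the same route as the paper: place $A=(0,-1)$, $B=(0,1)$, put $O=(x,0)$ on the perpendicular bisector, write $|OP|=r+r'$ and $|OA|=r'$ as (\ref{map1})--(\ref{map2}), subtract to get the linear relation (\ref{eqx}), substitute to obtain a quadratic in $r'$, and take the smaller root because the smaller tangent circle through $A,B$ subtends the larger inscribed angle. The only minor quibble is your aside that the spurious root corresponds to $|OP|=|r'-r|$: in the paper both roots of the quadratic derived from (\ref{map1}) satisfy $|OP|=r+r'$ and correspond to the two externally tangent circles of Figure~\ref{f4}, but your stated selection rule (smaller $r'$, larger angle) is the correct one and matches the paper.
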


\section{Vertex of degree 3}  \label{sectdeg3}

In this Section we consider the case where the degree of $v$ is three. 
Let $A,B$ and $C$ be the positions of its adjacent vertices in $G$.
We consider the first case of the Fermat point of triangle $ABC$ where all angles of $ABC$ are less than $120^{\circ}$.

\begin{prop} \label{prop2}
If all angles of $ABC$ are less than $120^{\circ}$ and 
the Fermat point $F$ of $ABC$ lies within circle $Z$, then $P^*=F$.
\end{prop}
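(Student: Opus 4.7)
The plan is to reduce the MMA problem at $v$ to a purely combinatorial fact about the three angles at a candidate point $P^*$: their minimum is at most $120^\circ$ globally, with equality characterizing the Fermat point $F$. Since the hypothesis places $F$ inside $Z$, the disk $\{P^*:|PP^*|\le r\}$ already contains the unconstrained optimum, so the restricted MMA maximum coincides with the global one and is attained at $P^*=F$.

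First I would handle the case in which an arbitrary candidate $P^*$ lies in the closed triangle $ABC$. Then the three rays from $P^*$ to $A,B,C$ partition the plane around $P^*$ into three sectors whose angles coincide with $\angle AP^*B,\angle BP^*C,\angle CP^*A$, so
\[
\angle AP^*B + \angle BP^*C + \angle CP^*A = 2\pi.
\]
This immediately gives $\min\{\angle AP^*B,\angle BP^*C,\angle CP^*A\}\le 120^\circ$, with equality if and only if all three equal $120^\circ$. By the Case 1 characterization recalled in Section \ref{s3}, the unique point satisfying this equality is the Fermat point $F$; the hypothesis that every angle of $ABC$ is below $120^\circ$ places us in that case and guarantees that $F$ lies in the interior of $ABC$.

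For the remaining case, $P^*$ outside $ABC$, I would observe that one of the vertices, say $B$, must lie in the angular sector spanned by the rays $P^*A$ and $P^*C$, so that $\angle AP^*C=\angle AP^*B+\angle BP^*C$. Since $\angle AP^*C\le \pi$, the minimum of the three angles is at most $\pi/2<120^\circ$, and no such $P^*$ can compete with $F$. Combining the two cases yields the proposition. The only mildly delicate step is the outside case, where one must be careful that $\angle AP^*B$ denotes the unoriented value in $[0,\pi]$ and correctly identify which of $A,B,C$ is ``between'' the other two as seen from $P^*$ so that the additivity relation above is the right one; everything else is immediate from the Fermat-point facts already recalled.
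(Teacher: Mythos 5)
Your proof is correct, but it takes a genuinely different route from the paper's. The paper argues by direct comparison with $F$: assuming $P^*\ne F$, it locates $F$ inside one of the three triangles $AP^*B$, $BP^*C$, $AP^*C$ (their union is the convex hull of $A,B,C,P^*$, which contains $F$) and uses the fact that a point interior to a triangle subtends the opposite side at a strictly larger angle than the apex does, so the corresponding angle at $P^*$ is below $120^\circ$ and $P^*$ loses to $F$. You instead prove a global upper bound: for $P^*$ in the closed triangle the three angles sum to $2\pi$, forcing the minimum to be at most $120^\circ$ with equality only when all three are $120^\circ$, i.e.\ only at the first isogonic center; for $P^*$ outside, the additivity $\angle AP^*C=\angle AP^*B+\angle BP^*C$ (for the appropriate labelling) caps the minimum at $\pi/2$. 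Your version buys a slightly stronger statement --- the unconstrained optimum over the whole plane is $120^\circ$, attained uniquely at $F$ --- which makes it transparent that the constraint $|PP^*|\le r$ is inactive whenever $F\in Z$; it also avoids the paper's implicit step of locating $F$ relative to $P^*$, at the cost of needing the uniqueness of the isogonic center and the (correctly flagged) case analysis for exterior points. Both arguments are sound; the only shared loose end is the degenerate configuration $P^*\in\{A,B,C\}$, where the angles are undefined, which neither proof addresses and which does not affect the conclusion.
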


\begin{proof}
Without loss of generality we assume that 
$A,B,C$ are in counterclockwise order around $P^*$, see Figure \ref{Fpoint} (a).
Suppose to the contrary that $P^*$ is a point inside $Z$ different from $F$ as shown in Figure \ref{Fpoint} (b).
The Fermat point lies in one of triangles $AP^*B,  BP^*C$,  or $AP^*C$.
Suppose that it belongs to triangle $AP^*C$. 
Then $\angle AP^*C < \angle AFC =120^{\circ}$ and $P^*$ is not the optimal point. 
Therefore $P^*=F$, see Figure \ref{Fpoint} (c).
\end{proof}

\begin{figure}[htp] 
\centering
\includegraphics{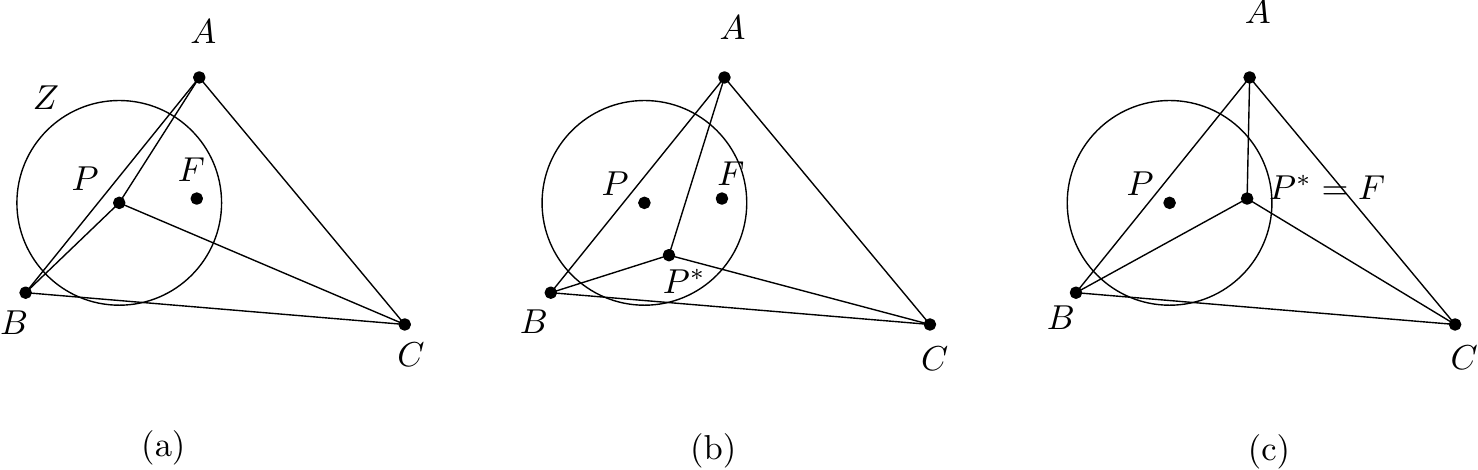}
\caption{(a) Setting of Proposition \ref{prop2}. (b) Contradiction when $P^*\neq F$.  (c)  Points  $P^*$ and $F$ coincide.}
\label{Fpoint}
\end{figure}

Note that in the case of Proposition \ref{prop2} all angles around $P^*$ are equal 
$120^{\circ}$.  
In the rest of this section we consider cases where not all angles 
$\angle AP^*B,  \angle BP^*C$,  and $\angle AP^*C$ are equal. 
If the smallest angle of $\angle AP^*B,  \angle BP^*C$,  and $\angle AP^*C$
is unique, say it is $\angle AP^*B$, then it can be computed by solving the maximum angle problem for $A,B$ using the method from the previous section.
It remains to consider the case of two smallest angles.

\subsection{Two smallest angles} \label{twomin}

In this section we assume that there are exactly two smallest angles from $\{\angle AP^*B, \angle  AP^*C, \angle BP^*C\}$. First, we show that $P^*$ must be on the boundary of circle $Z$. 

\begin{lemma} \label{lemma2angles}
If the degree of $v$ is 3 and there are two smallest angles around $P^*$ then $P^*$ lies on the boundary of $Z$. 
\end{lemma}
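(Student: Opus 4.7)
The strategy is proof by contradiction. Assume $P^*$ lies strictly inside $Z$, so $P^*$ is free to move a little in every direction. Denote the two equal smallest angles by $\alpha := \angle AP^*B = \angle BP^*C$ and the third by $\beta := \angle CP^*A$, where $\beta > \alpha$ and (since the three sum to $2\pi$) $\alpha < 2\pi/3 < \pi$. I will produce a unit vector $\vec u$ such that an $\epsilon$-displacement of $P^*$ along $\vec u$ strictly increases both $\angle AP^*B$ and $\angle BP^*C$ while keeping $\angle CP^*A$ larger than the other two; this raises the minimum angle, contradicting the optimality of $P^*$.

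To find $\vec u$, compute the gradients $\vec g_1 = \nabla \angle AP^*B$ and $\vec g_2 = \nabla \angle BP^*C$ with respect to $P^*$. Writing an angle as the difference of the polar arguments of $A - P^*$ and $B - P^*$ yields
\[
\vec g_1 = \frac{\hat t_A}{|P^*A|} - \frac{\hat t_B}{|P^*B|}, \qquad \vec g_2 = \frac{\hat t_B}{|P^*B|} - \frac{\hat t_C}{|P^*C|},
\]
where $\hat t_X$ denotes the unit vector in direction $X - P^*$ rotated $90^{\circ}$ counterclockwise. The inscribed-angle theorem gives a clean geometric interpretation: the level curves of $X \mapsto \angle AXB$ are circular arcs through $A,B$, so the gradient at $P^*$ is along the radius from $P^*$ to the circumcenter $O_1$ of $\triangle ABP^*$; since $\alpha < \pi$ the vector $\vec g_1$ points from $P^*$ toward $O_1$. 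Analogously, $\vec g_2$ points from $P^*$ toward the circumcenter $O_2$ of $\triangle BCP^*$.

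The decisive geometric observation is that both $O_1$ and $O_2$ are equidistant from $B$ and from $P^*$, so both lie on the perpendicular bisector $\ell$ of the segment $BP^*$. Consequently $\vec g_1$ and $\vec g_2$ are both vectors from the off-$\ell$ point $P^*$ to points of the fixed line $\ell$. Such a pair of vectors can be parallel only if $O_1 = O_2$, and can \emph{never} be antiparallel: antiparallelism would force $P^*$ to lie on $\ell$, i.e.\ on the perpendicular bisector of the segment joining $P^*$ to the distinct point $B$, which is impossible.

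Because $\vec g_1$ and $\vec g_2$ are nonzero and not antiparallel, the two open half-planes $\{\vec u : \vec g_i \cdot \vec u > 0\}$ have nonempty intersection. Pick any $\vec u$ in this intersection and set $P^*_\epsilon = P^* + \epsilon \vec u$. For $\epsilon$ small enough, $P^*_\epsilon$ remains in the interior of $Z$, both $\angle AP^*_\epsilon B$ and $\angle BP^*_\epsilon C$ strictly exceed $\alpha$, and by continuity $\angle CP^*_\epsilon A$ still exceeds those two angles (since at $\epsilon=0$ we had $\beta>\alpha$ with strict inequality). Hence the minimum of the three angles is strictly greater at $P^*_\epsilon$ than at $P^*$, contradicting the optimality of $P^*$. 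The main obstacle I anticipate is justifying the sign convention in the claim ``$\vec g_i$ points from $P^*$ toward $O_i$'' and handling the degenerate case $O_1 = O_2$ (i.e.\ $A,B,C,P^*$ are concyclic), where the two gradients become parallel with the same orientation and the improving direction is immediate.
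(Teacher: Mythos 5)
Your proof is correct, and it rests on the same geometric objects as the paper's: the circumscribed circles of $\triangle ABP^*$ and $\triangle BCP^*$. The paper argues non-differentially: those two circles meet at the two distinct points $B$ and $P^*$, so their interiors overlap in a lune having $P^*$ on its boundary; intersecting that lune with the interior of $Z$ gives a nonempty set of nearby points $P'$ at which both $\angle AP'B$ and $\angle BP'C$ strictly increase (by the inscribed-angle theorem), and continuity keeps $\angle CP'A$ the largest. You instead take the differential route, showing the gradients of the two small angles point toward the respective circumcenters $O_1, O_2$ and then proving they cannot be antiparallel because both centers lie on the perpendicular bisector of $BP^*$ while $P^*$ does not. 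Your antiparallelism lemma is in effect an explicit proof that the lune opens with positive angle at $P^*$, which the paper takes as geometrically evident; so your version is somewhat more rigorous on that point, at the cost of needing the sign convention for the gradient (which you correctly flag, and which does hold since the two equal angles are less than $2\pi/3 < \pi$) and a separate remark for the concyclic case $O_1=O_2$, which the paper's lune phrasing also silently assumes away. Both arguments share the same minor gloss that the inscribed-angle comparison requires $P'$ on the same side of the relevant chord as $P^*$, which is automatic for $P'$ close to $P^*$.
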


\begin{proof}
We assume that \\
(i) $A,B,C$ are in counterclockwise order around $P^*$, and \\
(ii) $\angle AP^*B=\angle BP^*C<\angle CP^*A$.\\
Suppose to the contrary that $P^*$ lies inside $Z$ as shown in Figure \ref{pointInCirc}.
Draw two circles, one passing through points $A,B,P^*$ and the second passing through points $B,C,P^*$. Since these circles intersect by two points $B$ and $P^*$, then their interiors intersect by a lune. 
Since $P^*$ is inside circle $Z$, then the intersection of the lune and the interior of $Z$ is  a non-empty set $I$, see the shaded area in Figure \ref{pointInCirc}(a).
For any point $P'$ in $I$, $\angle AP'B>\angle AP^*B$ and $\angle BP'C>\angle BP^*C$ 
(this can be seen, for example, by the fact that the inscribed angle of a chord equals half of the central angle, see Figure \ref{pointInCirc}(b)).
We choose $P'$ from $I$ close enough to $P^*$ such that $\angle AP'B,\angle BP'C<\angle CP'A$. Then the smallest angle around $P'$ is greater than $\angle AP^*B$. 
Therefore $P^*$ is not the solution of the MMA problem. Contradiction.
\end{proof}
 
\begin{figure}[htp]
\centering
\includegraphics{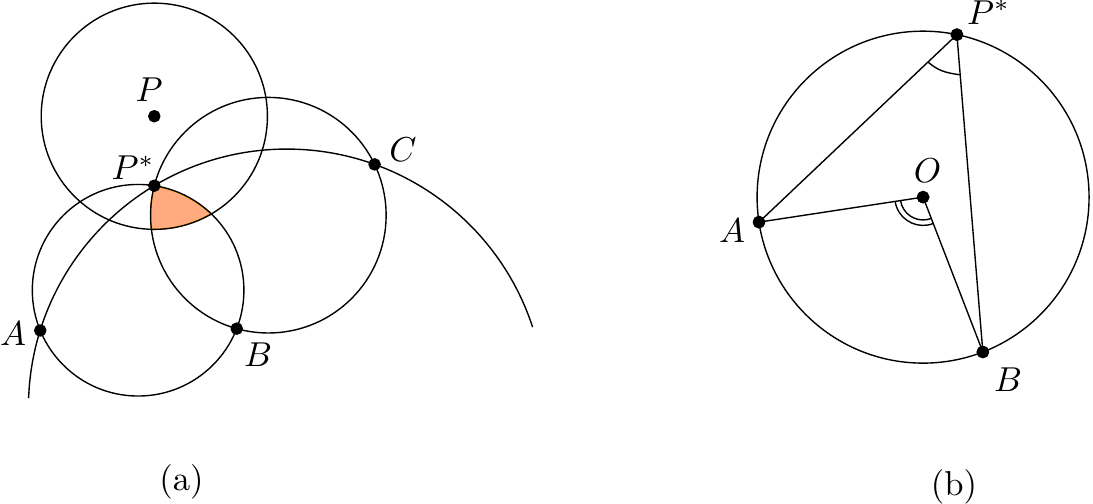}
\caption{(a) For all points $P'$ in the shaded area $\angle AP'B>\angle AP^*B$ and $\angle BP'C>\angle BP^*C$.  (b)  The inscribed angle $\angle AP^*B$ of chord $AB$ equals half of the central angle $\angle AOB$.}
\label{pointInCirc}
\end{figure}

The main result in this section is the following

\begin{thm} \label{thm2angles}
If the degree of vertex $v$ is three and there are two smallest angles around $P^*$ then $P^*$ can be computed by solving a polynomial equations of degree at most four.
\end{thm}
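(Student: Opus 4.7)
The plan is to invoke Lemma \ref{lemma2angles} to put $P^*$ on the boundary of $Z$, parametrize $P^* = P + r(\cos\theta,\sin\theta)$, and then express the tied-smallest condition $\angle AP^*B = \angle BP^*C$ as a single algebraic equation whose degree in a suitable variable is at most four. Up to relabeling of $A,B,C$ there are only three such pairings, and each is handled the same way, so it suffices to treat one.

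First I would set $u = A - P^*$, $v = B - P^*$, $w = C - P^*$ and use the scalar cross product $a \times b = a_x b_y - a_y b_x$. Assuming $A,B,C$ are in counterclockwise order around $P^*$, both $u \times v$ and $v \times w$ are positive, so $\tan(\angle AP^*B) = (u \times v)/(u \cdot v)$ and $\tan(\angle BP^*C) = (v \times w)/(v \cdot w)$, and equality of the two angles in $(0,\pi)$ is equivalent to
\[
(u \times v)(v \cdot w) \;=\; (v \times w)(u \cdot v).
\]
Each cross product equals $A \times B + (B - A) \times P^*$ (and similarly for $v \times w$), which is linear in $P^*$ because $P^* \times P^*$ vanishes. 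Each dot product $u \cdot v = A \cdot B - (A+B)\cdot P^* + |P^*|^2$ is a priori quadratic, but on $Z$ one can substitute $|P^*|^2 = 2 P \cdot P^* + r^2 - |P|^2$ to make it linear. Thus, restricted to $Z$, the displayed equation is a difference of products of two linear forms in $(P^*_x, P^*_y)$ and so has total degree at most $2$.

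Next I would apply the Weierstrass substitution $t = \tan(\theta/2)$, so that $\cos\theta$ and $\sin\theta$ become rational functions of $t$ over the common denominator $1 + t^2$. A polynomial of degree $2$ in $(\cos\theta,\sin\theta)$, after substituting $P^* = P + r(\cos\theta,\sin\theta)$ and clearing $(1+t^2)^2$, becomes a polynomial in $t$ of degree at most $4$, which is the advertised bound. The optimum is then recovered by evaluating the common angle at each of the at most four real roots, discarding any candidate that violates the counterclockwise ordering or the tied-minimum hypothesis, and keeping the one with the largest common angle.

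The main obstacle I anticipate is the degree accounting: one has to verify carefully that the $P^* \times P^*$ cancellation really keeps the cross products linear and that eliminating $|P^*|^2$ via the circle equation really drops each dot product from quadratic to linear, so that the naive degree-$3$ count (cross product times dot product) collapses to degree $2$ on $Z$ and hence to degree $4$ in $t$. A smaller but necessary point is justifying the use of the signed cross product rather than $|u \times v|$ via the counterclockwise hypothesis, and separately handling the degenerate case where some $u \cdot v = 0$ (an angle equal to $\pi/2$) so that the tangent form is indeterminate.
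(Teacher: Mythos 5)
Your proposal is correct, and it reaches the degree-four bound by a genuinely different route than the paper. The paper writes $\cos(\angle AP^*B)=\cos(\angle BP^*C)$ via the law of cosines, which forces two squaring steps (first to clear the square roots $|AP^*|$, $|CP^*|$, then to eliminate $y$ after substituting $y^2=r^2-x^2$); this yields a sextic in $x$, and the degree-four claim only survives because a separate lemma exhibits the extraneous quadratic factor $x^2+A_y^2-r^2$ introduced by the squarings (under the normalization $P=(0,0)$, $A_y=C_y$). Your cotangent formulation $(u\times v)(v\cdot w)=(v\times w)(u\cdot v)$ is already polynomial in the coordinates of $P^*$ --- no radicals, hence no squaring and no extraneous roots --- and your two degree-reduction observations (the $P^*\times P^*$ cancellation keeping cross products affine, and the substitution $|P^*|^2=2P\cdot P^*+r^2-|P|^2$ linearizing the dot products on $Z$) do check out, leaving a conic whose intersection with $Z$ is governed, via the Weierstrass substitution, by a quartic in $t$; the degree bound falls out structurally rather than from a discovered factorization. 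What the paper's route buys in exchange is explicit closed-form coefficients ($c_i,d_i,e_i$) in the original Cartesian unknown $x$, ready for implementation, and an equation (cosine equality) that needs no orientation hypothesis since cosine is even, whereas you must invoke the counterclockwise ordering to fix the signs of $u\times v$ and $v\times w$ --- which you do, and your cross-multiplied form even absorbs the $\pi/2$ degeneracy you worry about, since $u\cdot v=0$ with $u\times v>0$ forces $v\cdot w=0$. The only loose end worth a sentence in a final write-up is that the parametrization $t=\tan(\theta/2)$ misses the single point $\theta=\pi$ of $Z$, which should be tested separately (or excluded by rotating coordinates).
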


\begin{figure}[htp]
\centering
\includegraphics[scale=1]{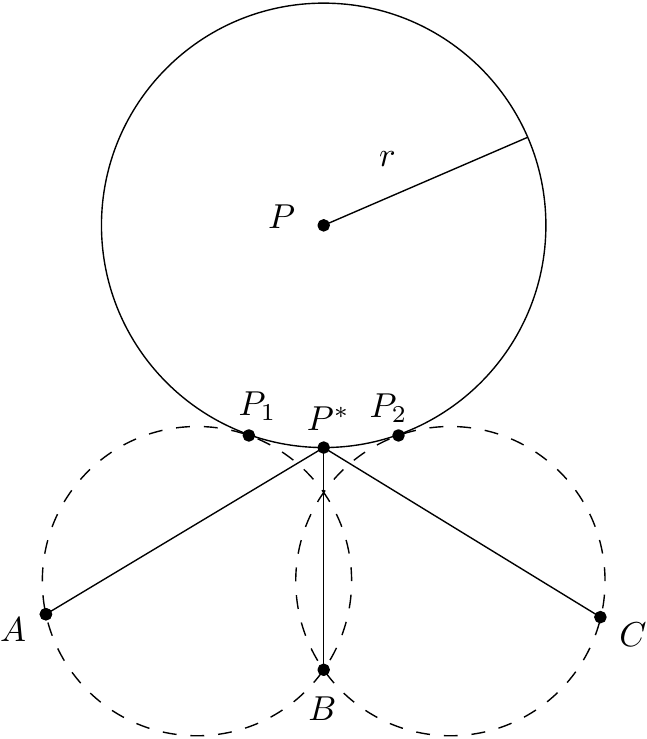}
\caption{
Points $P_1$ and $P_2$ are the points on circle $Z$ maximizing angles $\angle AP_1B$ and $\angle BP_2C$, respectively. Point $P^*$ is different from $P_1$ and $P_2$ and angles $\angle AP^*B=\angle BP^*C$.}
\label{Ipe2Angle}
\end{figure}

\begin{proof}
First we show that it is possible that there are two smallest angles. 
An example where two smallest angles around $P^*$ are equal is shown in Figure  \ref{Ipe2Angle} where points $P,P^*$ and $B$ are collinear and 
points $A$ and $C$ are symmetric about the line $PB$.
Therefore angles $\angle AP_2B<\angle AP^*B<\angle AP_1B$ and $P^*$ is the solution of the MMA problem.

We now prove the theorem. Suppose that $\angle AP^*B=\angle BP^*C<\angle CP^*A$. 
We need to find coordinates of $P^*(x,y)$.
By the law of cosines we write the equation $\cos(\angle AP^*B)=\cos(\angle BP^*C)$ as

\begin{gather}
\frac {|BP^*|^2+|AP^*|^2-|AB|^2}{2\cdot|BP^*|\cdot|AP^*|}=\frac{|BP^*|^2+|CP^*|^2-|BC|^2}{2\cdot|BP^*|\cdot |CP^*|}\\
|CP^*|\cdot (|BP^*|^2+|AP^*|^2-|AB|^2)=|AP^*|\cdot (|BP^*|^2+|CP^*|^2-|BC|^2) \label{eq43} 
\end{gather}

Without loss of generality we can assume that $P=(0,0)$ and $A_y=C_y$. 
By Lemma \ref{lemma2angles} we assume $|PP^*|=r$. 
Then $x^2+y^2=r^2$ and
\begin{gather}
|CP^*|^2=r^2+C_x^2+C_y^2-2C_xx-2C_yy=a_1x+a_2y+a_3\\
\frac 12 (|BP^*|^2+|AP^*|^2-|AB|^2) =
r^2+A_xB_x+A_yB_y-(A_x+B_x)x-(A_y+B_y)y
=a_4x+a_5y+a_6\\
|AP^*|^2=r^2+A_x^2+A_y^2-2A_xx-2A_yy=b_1x+b_2y+b_3\\
\frac 12 (|BP^*|^2+|P^*C|^2-|BC|^2)=
r^2+B_xC_x+B_yC_y-(B_x+C_x)x-(B_y+C_y)y
=b_4x+b_5y+b_6,
\end{gather}
where $a_1=-2C_x, a_2=-2C_y, a_3=r^2+{C_x}^2+{C_y}^2,a_4=-(A_x+B_x), a_5=-(A_y+B_y)$,\\
$a_6=B_xA_x+B_yA_y+r^2$, 
$b_1=-2Ax, b_2=-2A_y, b_3=r^2+{A_x}^2+{Ay}^2, b_4=-(B_x+C_x), b_5=-(By+Cy)$,\\ $b_6=B_xC_x+B_yC_y+r^2$.

Take the square of both sides of Equation (\ref{eq43})
\begin{gather}
(a_1x+a_2y+a_3)(a_4x+a_5y+a_6)^2=
(b_1x+b_2y+b_3)(b_4x+b_5y+b_6)^2 \label{eq57}
\end{gather}
It can be written as
\begin{gather}
c_{1}x^3+c_{2}y^3+ c_{3}x^2+c_{4}y^2+c_{5}x^2y+c_{6}xy^2+c_{7}x+c_{8}y+c_{9}xy+c_{10}=0 
\label{eq-c}
\end{gather}
where $c_{1}=a_1a^2_4-b_1b^2_4, c_{2}=a_2a^2_5-b_2b^2_5, c_{3}=(2a_1a_4a_6+a_3a^2_4-(2b_1b_4b_6+b_3b^2_4))$, \\
$c_{4}=(2a_2a_5a_6+a_3a^2_5-(2b_2b_5b_6+b_3b^2_5)), c_{5}=(2a_1a_4a_5+a_2a^2_4-(2b_1b_4b_5+b_2b^2_4))$,\\
$c_{6}=(2a_2a_4a_5+a_1a^2_5-(2b_2b_4b_5+b_1b^2_5)), c_{7}=(2a_3a_4a_6+a_1a^2_6-(2b_3b_4b_6+b_1b^2_6))$, \\
$c_{8}=(2a_3a_5a_6+a_2a^2_6-(2b_3b_5b_6+b_2b^2_6)), c_{9}=2(a_1a_5a_6+a_2a_4a_6+a_3a_4a_5-(b_1b_5b_6+b_2b_4b_6+b_3b_4b_5))$,\\
$c_{10}=a_3a^2_6$.

Using $y^2=r^2-x^2$ we reduce powers of $y$ in Equation (\ref{eq-c})
\begin{gather}
c_{1}x^3+y(c_2r^2+c_8)+c_3x^2+c_4-c_4x^2+x^2y(c_5-c_2)+
c_6x-c_6x^3+ c_7x+c_8y+c_9xy+c_{10}=0\\
d_1x^3+d_2x^2+d_3x+d_4
+(d_5x^2+d_6x+d_7)y=0 \label{eq49}
\end{gather}
where
$d_1=c_1-c_6, d_2=c_3-c_4,d_3=c_6r^2+c_7,d_4=c_4r^2+c_{10},
d_5=c_5-c_2,d_6=c_9,d_7=c_2r^2+c_8$.

\begin{gather}
\left( d_1x^3+d_2x^2+d_3x+d_4 \right)^2 - ((d_5x^2+d_6x+d_7)^2\cdot(r^2-x^2))=0 \label{eq11}
\end{gather}

Simplifying Equation (\ref{eq11}) we get a sextic equation

\begin{equation}
e_1x^6+e_2x^5+e_3x^4+e_4x^3+e_5x^2+e_6x+e_7=0 \label{eq12}
\end{equation}
where
$e_1=d_1^2+d_5^2, 
e_2=2(d_1d_2+d_5d_6),
e_3=(d_2^2+2d_1d_3-(d_5^2r^2-(d_6^2+2d_5d_7)))$,\\
$e_4=(2(d_1d_4+d_2d_3)-(2(d_5d_6r^2-d_6d_7))),e_5=((d_3^2+2d_2d_4)-((d_6^2+2d_5d_7)r^2-d_7^2))$,\\ 
$e_6=2d_3d_4-(2d_6d_7r^2),e_7=d_4^2-d_7^2r^2$.

By Lemma \ref{factor} the sextic equation has a quadratic factor. 
Dividing by it, we obtain a polynomial of degree four (Equation (\ref{eqn1}),(\ref{eqn2}), or
(\ref{eqn3})) and the theorem follows.
\end{proof}

\begin{lemma} \label{factor}
The polynomial equation (\ref{eq12}) has factor $x^2+A_y^2-r^2$.
\end{lemma}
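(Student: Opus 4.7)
The plan is to recognize the two roots of $x^2+A_y^2-r^2$ as corresponding to a geometrically degenerate position of $P^*$ that satisfies the sextic (\ref{eq12}) for a spurious reason, so that the quadratic must divide the sextic. Recall from the normalization used to derive (\ref{eq12}) in Theorem~\ref{thm2angles} that $A_y=C_y$, so the horizontal line $\ell:y=A_y$ contains both $A$ and $C$. Any point $P^*=(x,A_y)\in\ell$ is therefore collinear with $A$ and $C$, which forces the angles $\angle AP^*B$ and $\angle BP^*C$ to be either equal (if $P^*$ lies outside segment $AC$) or supplementary (if $P^*$ lies between $A$ and $C$). In either case $\cos(\angle AP^*B)=\pm\cos(\angle BP^*C)$, so after squaring $P^*$ satisfies identity (\ref{eq57}), even though it typically fails the unsquared equation (\ref{eq43}). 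This is exactly the spurious branch introduced by the squaring step in the derivation.

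Next I would intersect $\ell$ with the circle $Z$: the intersection consists of the two points $(\pm\sqrt{r^2-A_y^2},A_y)$ (for $r>|A_y|$), whose $x$-coordinates are precisely the two roots of $x^2+A_y^2-r^2$. For each such $P^*$, both (\ref{eq57}) and $x^2+y^2=r^2$ hold, so the remaining steps of the derivation in Theorem~\ref{thm2angles}---substitute $y^2=r^2-x^2$ into (\ref{eq-c}) to obtain (\ref{eq49}), then isolate the $y$-term and square to eliminate $y$---show that $x$ is a root of the sextic in (\ref{eq12}). Thus both roots of $x^2+A_y^2-r^2$ are roots of the sextic, so this quadratic divides the LHS of (\ref{eq12}).

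The main obstacle is the degenerate case $r\le|A_y|$, where $\ell$ and $Z$ do not meet in two distinct real points, so the geometric argument fails to produce two roots directly. I would handle this by observing that the divisibility is itself a polynomial identity in the symbolic parameters $A_x,A_y,B_x,B_y,C_x,r$ (the $e_i$ being polynomials in these), and so establishing it on the open dense set $\{r>|A_y|\}$ suffices---the general case follows by continuity of the remainder in polynomial long division. A purely algebraic alternative would be to perform the division $e_1x^6+\cdots+e_7=(x^2+A_y^2-r^2)Q(x)$ symbolically and verify that the remainder vanishes identically; this is tedious but routine, whereas the geometric picture above is what explains \emph{why} this particular quadratic appears as a spurious factor.
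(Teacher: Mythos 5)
Your proof is correct, and it takes a genuinely different route from the paper's. The paper proves the lemma by direct algebraic verification in three cases according to the sign of $A_y^2-r^2$: after rescaling so that $A_y^2-r^2\in\{0,1,-1\}$, it asserts linear identities among the coefficients ($e_5=e_6=e_7=0$, or $e_6=e_4\mp e_2$ and a companion identity for $e_7$) that force $x^2$, $x^2+1$, or $x^2-1$ to divide the sextic, and writes down the resulting quartic quotient explicitly. You instead identify the two roots of $x^2+A_y^2-r^2$ as the $x$-coordinates of the points where the line $AC$ (which is horizontal because of the normalization $A_y=C_y$) meets the circle $Z$: at such points $\angle AP^*B$ and $\angle BP^*C$ are equal or supplementary, so $\cos^2\angle AP^*B=\cos^2\angle BP^*C$ and the squared equation (\ref{eq57}) holds spuriously, whence these $x$-values survive the elimination of $y$ and are roots of (\ref{eq12}); the case $r\le|A_y|$ is then absorbed by noting that the remainder of the division by the monic quadratic is polynomial in the parameters and vanishes on the nonempty open set $\{r>|A_y|\}$. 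Your argument has the advantage of explaining \emph{why} this particular quadratic factor appears (it is exactly the extraneous branch introduced by the squaring steps) and of not resting on coefficient identities that the paper states without derivation; the paper's approach, by contrast, is constructive in that it produces the explicit degree-four quotients (\ref{eqn1})--(\ref{eqn3}) that the algorithm actually solves, whereas with your argument one still has to carry out the polynomial division afterwards to obtain the quartic.
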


\begin{proof}
We consider 3 cases.

Case I: $A_y=r$.
Then $e_5=e_6=e_7=0$ and the polynomial (\ref{eq12}) 
has factor $x^2$. The polynomial is reduced to the polynomial of degree 4
\begin{equation} \label{eqn1}
e_1x^4+e_2x^3+e_3x^2+e_4x=0.
\end{equation}

Case II: $A_y^2-r^2>0$.
We scale the coordinates such that $A_y^2-r^2=1$.
Then $e_6=e_4-e_2$ , $e_7=e_3-e_1$ and the polynomial (\ref{eq12}) 
has factor $x^2 + 1$. The polynomial is reduced to the polynomial of degree 4

\begin{equation} \label{eqn2}
e_1x^4+e_2x^3+(e_3-e_1)x^2+(e_4-e_2)x+e_5-(e_3-e_1)=0.
\end{equation}

Case III: $A_y^2-r^2<0$.
We scale the coordinates such that $A_y^2-r^2=-1$.
Then $e_6=e_4+e_2$ , $e_7=e_3+e_1$ and the polynomial (\ref{eq12}) 
has factor $x^2 - 1$. The polynomial is reduced to the polynomial of degree 4
\begin{equation} \label{eqn3}
e_1x^4+e_2x^3+(e_3+e_1)x^2+(e_4+e_2)x+e_5+(e_3+e_1)=0.
\end{equation}
The lemma follows.
\end{proof}

\section{Algorithm} 
\label{algorithm}

In this Section we discuss how to modify the spring embedder in order to take into account 
the angles between embedded edges. 
Let $G=(V,E)$ be an input graph. 
At the initial step the spring embedder randomly places the vertices of $G$ in the plane.
Then, it iterates a simultaneous motion of the vertex positions based on
one or several forces (springs). 
We describe a new force using angle optimization.
For every vertex $v$ with a position $P$ in the plane, the algorithm computes 
a new position $P^*$ and uses vector $PP^*$ as a force applied to vertex $v$.

{\bf Angle Optimization Algorithm}\\
Input: Graph $G=(V,E)$ embedded in the plane and radius $r$.\\
Output: New embedding of graph $G$ where each vertex is translated within distance $r$.

For each vertex $v\in V$ do the following steps.
\begin{enumerate}
\item Let $P$ be the position of $v$ in the plane. We compute $P^*$ as follows. Every time  $P^*$ is assigned, we proceed to the next vertex $v$.
\item If the degree of $v$ is at most one then set $P^*=P$.
\item Suppose that the degree of $v$ is equal to two. Let $A$ and $B$ be the positions of vertices adjacent to $v$. 
\begin{enumerate}
\item
Suppose $AB\cap Z\ne\emptyset$. Compute point $Q$ using Equations (\ref{qx}) and (\ref{qy}).  
If $|PQ|\le r$ then $P^*=Z$; otherwise assign $P^*$ to the endpoint of segment $AB\cap Q$ that is closer to $Q$.
\item If $AB\cap Z=\emptyset$ then set compute $P^*$ 
using Proposition \ref{propdeg2a}. 
\end{enumerate}
\item Suppose that the degree of $v$ is equal to three. 
Let $A,B$, and $C$ be the positions of vertices adjacent to $v$.
Compute Fermat point $F$ of triangle $ABC$.
If $|PF|\le r$ then set $P^*=F$. Otherwise, 
for each segment $ab\in \{AB,AC,BC\}$, compute point $P_{ab}$ maximizing angle 
$\angle aP_{ab}b$ (Section \ref{sectdeg2}). If angle $\angle aP_{ab}b$ is the smallest angle from
$\{ \angle AP_{AB}B, \angle AP_{AC}C, \angle BP_{BC}C \}$ then set $P^*=P_{ab}$. 
Otherwise compute $P^*$ as the best solution using two smallest angles for every two segments from $\{ AB,AC,BC\}$ (Section \ref{twomin}). 
\item 
For the remaining vertices of degree at least four, apply the following grid approach.
Pick a grid stepsize $\delta$, for example $\delta=r/3$. 
Consider a grid with the origin at $P$. For every grid point $Q$ with $|PQ|\le r$,
compute the smallest angle $\alpha_Q$ if $P$ is moved to $Q$. Find $Q$ maximizing 
$\alpha_Q$. Assign $P^*=Q$.
\end{enumerate}

We implemented this algorithm\footnote{Demo is available at 
\url{http://www.utdallas.edu/~sxb027100/soft/AngleOpt/}.} and run it on several graphs.
First, we tested the algorithm on graph $T_{10}$ from \cite{rv-06} since it contains vertices of degree two.  The program draws $T_{10}$ with angles optimized, see Figure \ref{t10} (b). 
It can be compared with the drawing of the original embedder \cite{embedder}, see
Figure \ref{t10} (a). 

\begin{figure}[htp]
\centering
\begin{minipage}[b]{0.45\linewidth}
\centering \includegraphics[scale=0.4]{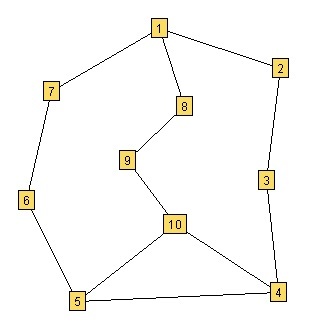}\\ (a)
\end{minipage}
\begin{minipage}[b]{0.45\linewidth}
\centering \includegraphics[scale=0.4]{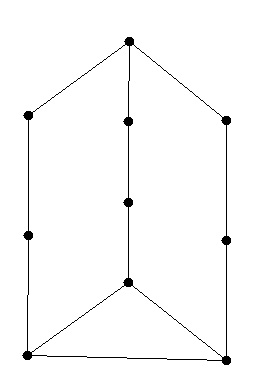}\\ (b)
\end{minipage}
\caption{Graph $T_{10}$ from \cite{rv-06}. (a) Drawing by Spring Embedder \cite{embedder}.
(b) Drawing by our program with angle optimization.}
\label{t10}
\end{figure}

We also tested the algorithm on the phylogenetic networks for Algae example \cite{pb-opn-12}.
The drawing of the Algae network by  the spring embedder \cite{embedder} is shown in Figure 
\ref{fig:algae}. It can be compared with our drawings in Figure \ref{algae}. 
In all drawings (in Figures \ref{fig:algae} and Figure \ref{algae}) the edge length constraints are
satisfied but the angle resolution in drawings in Figure \ref{algae} is significantly larger. 
The drawing shown in Figure \ref{algae} (a) uses the weighted version of the graph and the drawing in Figure \ref{algae} (b) uses the graph with  intermediate points on the edges. 

\begin{figure}[htp]
\centering
\begin{minipage}[b]{0.45\linewidth}
\centering \includegraphics[scale=0.4]{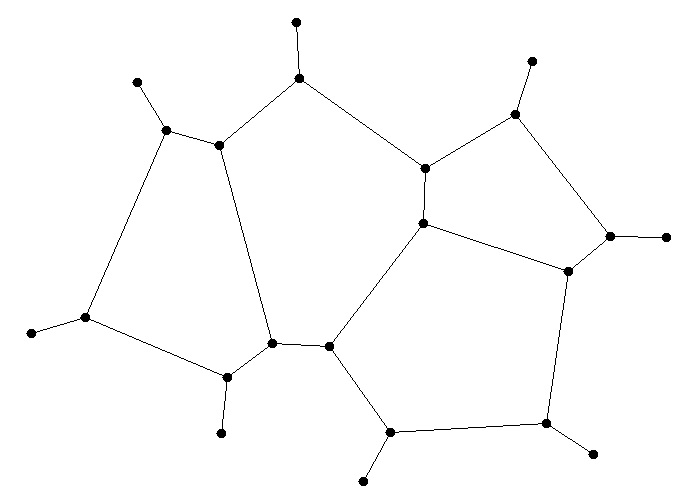}\\ (a)
\end{minipage}
\begin{minipage}[b]{0.45\linewidth}
\centering \includegraphics[scale=0.4]{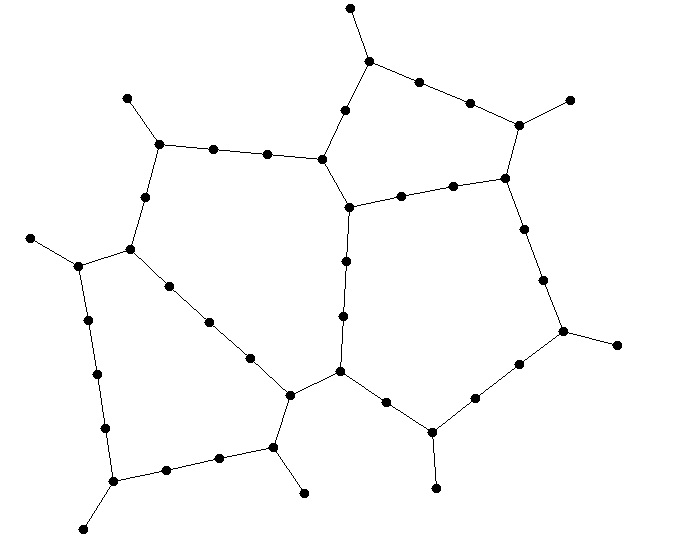}\\ (b)
\end{minipage}
\caption{Phylogenetic network for the Algae example \cite{pb-opn-12}.
(a) Weighted graph. (b) Graph with intermediate points on the edges.}
\label{algae}
\end{figure}

Finally, we run our program on the well-known graphs: 
the Petersen graph \cite{hs-93}, the Heawood graph \cite{har-91}, and 
the Herschel graph \cite{bh-81}, see Figure \ref{well}.
The Petersen graph is drawn with two crossings (the Petersen graph is in fact the 
smallest 2-crossing cubic graph), see Figure \ref{well}(a).
The Heawood graph has crossing number three (it is actually the smallest 3-crossing cubic graph) and our program found a drawing with exactly three crossings, see Figure \ref{well}(b).

\begin{figure}[htp]
\centering
\begin{minipage}[b]{0.31\linewidth}
\centering \includegraphics[scale=0.4]{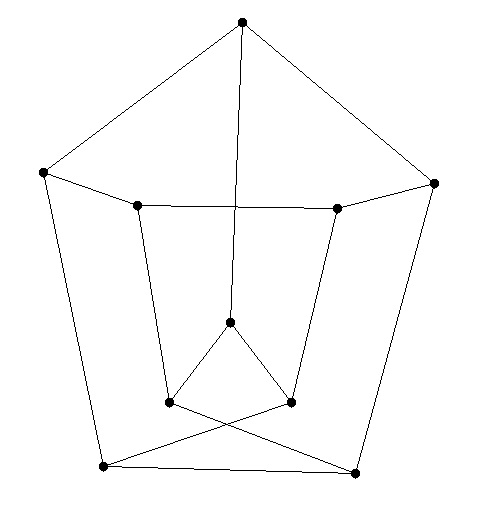}\\ (a)
\end{minipage}
\begin{minipage}[b]{0.31\linewidth}
\centering \includegraphics[scale=0.4]{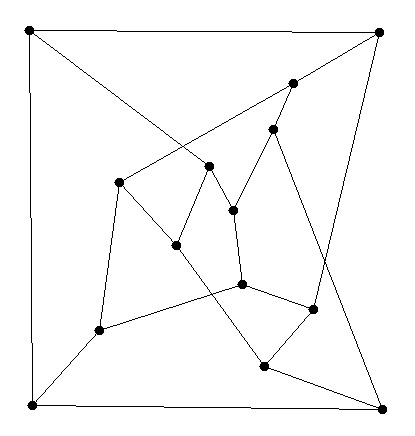}\\ (b)
\end{minipage}
\begin{minipage}[b]{0.31\linewidth}
\centering \includegraphics[scale=0.4]{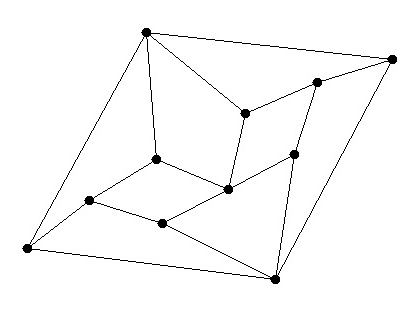}\\ (c)
\end{minipage}
\caption{(a) The Petersen graph. (b) The Heawood graph.
(c) The Herschel graph.}
\label{well}
\end{figure}

\section{Conclusion} \label{concl}

We proposed a novel approach to the problem of optimizing the angular resolution of a drawing. 
It has been applied to the spring embedder and the results with good angular resolution were obtained. It is known that the running time of the spring embedder can increase with the size of the graph. Therefore it is important to perform better the intermediate steps. The optimal solution of the MMA problem provides an opportunity to decrease the number of iterations of the spring embedder. 

The main result of this paper states that a vertex of degree at most three can be displaced optimally by solving a polynomial equation of degree at most four (which is interesting since the straightforward approach leads to a polynomial of degree six). A special case of MMA problem for degree three verices is the classical Fermat problem and the Fermat point is the solution for the special case. An interesting question for future research is to find the algebraic complexity of MMA problem for higher vertex degrees.

\bibliographystyle{plain}

\end{document}